\theoremstyle{plain}
\newtheorem{thm}{Theorem} 
\newtheorem{lem}[thm]{Lemma}
\newtheorem{clm}{Claim}
\theoremstyle{definition}
\title{Complexity Results for Implication Bases of Convex Geometries}
\author{Todd Bichoupan}
\address{Department of Computer Science, Hofstra University, Hempstead, NY  11549}
\email{tbichoupan1@pride.hofstra.edu}
\date{November 15, 2022}
\begin{document}

\maketitle

\begin{abstract}
\noindent
A convex geometry is finite zero-closed closure system that satisfies the anti-exchange property. Complexity results are given for two open problems related to representations of convex geometries using implication bases. In particular, the problem of optimizing an implication basis for a convex geometry is shown to be NP-hard by establishing a reduction from the minimum cardinality generator problem for general closure systems. Furthermore, even the problem of deciding whether an implication basis defines a convex geometry is shown to be co-NP-complete by a reduction from the Boolean tautology problem.
\end{abstract}

\section{Introduction}
Convex geometries are finite zero-closed closure systems satisfying the anti-exchange property. P. Edelman and R. Jamison introduced convex geometries in \cite{EdJa95} as a common formulation for various equivalent structures appearing in several areas of research, including lattice theory and relational databases. More recently, K. Adaricheva and J.B. Nation published a survey of work related to convex geometries \cite{AdNa16}. H. Yoshikawa et al. found a notable application for convex geometries in computer-driven education systems, where convex geometries are used in their dual form as \emph{antimatroids} to model the space of possible knowledge states of learners \cite{Yosh15}. 

This paper continues an effort to study convex geometries through the lens of implication bases.

A set of implications $\Sigma = \{A_i \rightarrow B_i : 1 \leq i \leq n\}$ defining a finite closure system is an \emph{optimum basis} if its size is minimal among all implication bases for the same closure system; the size of an implication basis is the sum of all cardinalities of the sets on the left and right sides of its implications. D. Maier showed that the general problem of finding an optimum basis for a closure system given by an arbitrary implication basis is NP-hard \cite{Mai80}.

On the other hand, several special classes of closure systems are known to have tractable optimum bases. M. Wild showed tractability of optimum bases for modular closure systems in \cite{Wild00}, and E. Boros et al. showed tractability of optimum bases for component-quadratic closure systems in \cite{Bor09}. In particular, many subclasses of the convex geometries are known to have tractable optimum bases. K. Kashiwabara and M. Nakamura proved tractability of optimum bases for affine geometries \cite{KaNa13}, and K. Adaricheva showed tractability of optimum bases for convex geometries that satisfy the Carousel property, convex geometries that do not have $D$-cycles, and convex geometries of order convex subsets in a poset \cite{Ada17}. But the problem of determining whether there is a tractable algorithm to find optimum bases of convex geometries in general has remained open.

We show that, given an arbitrary implication basis for a convex geometry, the problem of finding an optimum basis is NP-hard. The proof relies on a reduction from the \emph{minimum cardinality generator} problem for general closure systems, described in Section \ref{MinCaS}. In \cite{LuOs78}, C.L. Lucchesi and S.L. Osborn describe a problem equivalent to the minimum cardinality generator problem and prove NP-completeness by a reduction from the graph vertex cover problem. Section \ref{MinCaS} gives a proof of the intractability of the minimum cardinality generator problem by a reduction from the Boolean CNF satisfiability problem. Section \ref{OptBaS} completes the proof of intractability of optimum bases of convex geometries.

In Section \ref{DecConS}, we show that even the problem of deciding whether an implication basis defines a convex geometry is co-NP-complete. The proof is a direct reduction from the Boolean DNF tautology problem.

\section{Definitions and Background Theorems}

We use definitions and notational conventions from \cite{AdNa16I,CaMj03}.
A closure system $\langle X, \phi \rangle$ is a set, $X$, equipped with a function, $\phi: \mathcal{P}(X) \mapsto \mathcal{P}(X)$, satisfying the properties $A \subseteq \phi(A)$, $A \subseteq B \implies \phi(A) \subseteq \phi(B)$, and $\phi(\phi(A))=\phi(A)$ for all $A, B \subseteq X$; such a function $\phi$ is called a \emph{closure operator} on $X$, and the sets in the image of $\phi$ are called \emph{closed sets}. Closure systems can be equivalently defined in terms of their closed sets: if $\mathcal{F}$ is a family of subsets of $X$ such that $\mathcal{F}$ is closed under set intersection and $\mathcal{F}$ contains $X$, the function $\phi$ defined by $\phi(A) = \cap \{S \in \mathcal{F}: A \subseteq S \}$ for all $A \subseteq X$ is a closure operator on $X$, and the image of $\phi$ is precisely $\mathcal{F}$. Conversely, if $\langle X, \phi \rangle$ is a closure system and $\mathcal{F}$ is the image of $\phi$, then $\mathcal{F}$ is closed under set intersection, $\mathcal{F}$ contains $X$, and $\phi(A) = \cap \{S \in \mathcal{F}: A \subseteq S \}$ for all $A \subseteq X$. A closure system $\langle X, \phi \rangle$ is finite if its ground set, $X$, is finite. A closure system $\langle X, \phi \rangle$ is \emph{zero-closed} if the empty set is closed in the system, i.e. if $\phi(\emptyset)=\emptyset$.
\newline

A \emph{convex geometry} is a finite zero-closed closure system satisfying the anti-exchange property. For a finite closure system $\langle X, \phi \rangle$, the \emph{anti-exchange} property is the condition that $y \not \in \phi(A \cup \{x\})$ or $x \not \in \phi(A \cup \{y\})$ for every closed set $A \subsetneq X$ and all $x, y \in X \setminus A$ with $x \neq y$. A condition equivalent to the anti-exchange property for finite closure systems $\langle X, \phi \rangle$ is that for every closed set $A \subsetneq X$, there is an element $x \in X \setminus A$ such that $A \cup \{x\}$ is closed.
\newline

An \emph{implication} on a nonempty set $X$ is an ordered pair of sets $A, B \subseteq X$ where $B \neq \emptyset$, denoted $A \rightarrow B$. An implication $A \rightarrow B$ on $X$ holds in the closure system $\langle X, \phi \rangle$ if, for every closed set $S$ of $\langle X, \phi \rangle$, $A \subseteq S \implies B \subseteq S$, or equivalently if $B \subseteq \phi(A)$. For any set $\Sigma$ of implications on a set $X$, the family of sets $\mathcal{F} = \{ S \subseteq X: \forall (A \rightarrow B) \in \Sigma$: $A \subseteq S \implies B \subseteq S \}$ is closed under set intersection and contains $X$, and $\Sigma$ is called an \emph{implication basis} for the closure system on $X$ associated with $\mathcal{F}$. Given an implication basis $\Sigma$ of a finite closure system $\langle X, \phi \rangle$ and a set $S \subseteq X$, there are straightforward algorithms to compute $\phi(S)$.
\newline

A \emph{quasi-closed} set $Q$ of a closure system $\langle X, \phi \rangle$ is a subset of $X$ such that $Q$ is not closed and for every closed set $S$ of $\langle X, \phi \rangle$, $Q \subseteq S$ or $S \cap Q$ is closed; an equivalent definition is that $Q \subseteq X$ is quasi-closed if $Q$ is not closed and for all $S \subseteq Q$, $\phi(S)=\phi(Q)$ or $\phi(S) \subsetneq Q$. A \emph{critical set} $C$ of a closure system $\langle X, \phi \rangle$ is a quasi-closed set that is minimal among all quasi-closed sets of $\langle X, \phi \rangle$ that have the same closure as $C$; that is, a quasi-closed set $C$ is critical if for every quasi-closed set $Q \subsetneq C$, $\phi(Q) \subsetneq \phi(C)$. If $\langle X, \phi \rangle$ is a closure system, $\mathcal{F}$ is the family of all closed sets of $\langle X, \phi \rangle$, and $\mathcal{Q}$ is the family of all quasi-closed sets of $\langle X, \phi \rangle$, then the family of sets $\mathcal{F} \cup \mathcal{Q}$ is closed under set intersection and has an associated closure system, $\langle X, \sigma \rangle$, where $\sigma$ is called the \emph{saturation operator} of $\langle X, \phi \rangle$.

\begin{thm} \label{CanT}
If $\langle X, \phi \rangle$ is a finite closure system, $\sigma$ is the associated saturation operator, and $\Sigma$ is an associated implication basis, then for every critical set $C$ of $\langle X, \phi \rangle$, there is an implication $(A \rightarrow B) \in \Sigma$ such that $\sigma(A)=C$. Furthermore, for every finite closure system $\langle X, \phi \rangle$, the set $\Sigma_C = \{C \rightarrow \phi(C): C \text{ is a critical set of } \langle X, \phi \rangle\}$, called the \emph{canonical basis} of $\langle X, \phi \rangle$, is a valid implication basis of $\langle X, \phi \rangle$ and has minimal cardinality among all implication bases of $\langle X, \phi \rangle$.
\end{thm}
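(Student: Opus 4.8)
The statement is the classical canonical-basis theorem (Guigues--Duquenne) recast in the vocabulary of critical sets, and the plan is to prove it in three stages: a lower bound forcing every critical set to ``appear'' in every basis, the validity of $\Sigma_C$ as a basis, and then combining the two for minimum cardinality. \emph{Stage 1 (every basis meets every critical set).} Fix an implication basis $\Sigma$ of $\langle X,\phi\rangle$ and a critical set $C$. Since $C$ is quasi-closed it is not closed, hence not a model of $\Sigma$ (whose models are exactly the closed sets), so some $(A\to B)\in\Sigma$ is violated by $C$: $A\subseteq C$ while $B\not\subseteq C$. As every implication of $\Sigma$ holds in $\langle X,\phi\rangle$ we have $B\subseteq\phi(A)$, so $\phi(A)\not\subseteq C$; applying the second description of quasi-closedness to $A\subseteq C$ then leaves only $\phi(A)=\phi(C)$. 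Now $C$ is one of the closed sets of $\sigma$, so by monotonicity $\sigma(A)\subseteq\sigma(C)=C$; also $\sigma(A)\subseteq\phi(A)$ (since $\phi(A)$ is itself a closed set of $\sigma$ containing $A$), whence $\phi(\sigma(A))=\phi(A)=\phi(C)$. If $\sigma(A)\subsetneq C$, then $\sigma(A)$, being $\sigma$-closed, is either closed --- giving $\sigma(A)=\phi(\sigma(A))=\phi(C)\supsetneq C$, impossible --- or quasi-closed, in which case it is a quasi-closed proper subset of $C$ with the same closure as $C$, contradicting criticality. Hence $\sigma(A)=C$, which is the first assertion.

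\emph{Stage 2 (validity of $\Sigma_C$).} Each implication $C\to\phi(C)$ holds in $\langle X,\phi\rangle$, and $\phi(C)\neq\emptyset$ for a critical $C$ (a critical set is not closed, and $\phi(\emptyset)=\emptyset$ would make $\emptyset$ closed), so $\Sigma_C$ is a set of genuine implications; moreover every closed $S$ is a model of $\Sigma_C$ because $C\subseteq S$ forces $\phi(C)\subseteq\phi(S)=S$. The content is the converse, which I would phrase as: for every non-closed $S$ there is a critical $C\subseteq S$ with $\phi(C)\not\subseteq S$, proved by strong induction on $|S|$. If $S$ is quasi-closed, the quasi-closed sets contained in $S$ with closure $\phi(S)$ form a nonempty family (it contains $S$), and a minimal member $C$ of it is critical --- a quasi-closed $Q\subsetneq C$ lies in $S$, so $\phi(Q)\neq\phi(S)=\phi(C)$ by minimality, and $\phi(Q)\subseteq\phi(C)$ gives $\phi(Q)\subsetneq\phi(C)$ --- with $C\subseteq S$ and $\phi(C)=\phi(S)\supsetneq S$. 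If $S$ is not quasi-closed (and not closed), the failure of the second quasi-closedness condition produces $U\subsetneq S$ with $\phi(U)\neq\phi(S)$ and $\phi(U)\not\subseteq S$; iterating the relative-closure map $T\mapsto\phi(T)\cap S$ starting from $U$ (which does not change the closure, preserves $\phi(T)\not\subseteq S$, and is nondecreasing inside the finite set $S$, hence stabilizes) yields $T$ with $T=\phi(T)\cap S$, $\phi(T)=\phi(U)\neq\phi(S)$ (so $T\subsetneq S$), $\phi(T)\not\subseteq S$, and $T$ not closed (else $\phi(T)=T\subseteq S$). The inductive hypothesis supplies a critical $C\subseteq T$ with $\phi(C)\not\subseteq T=\phi(T)\cap S$; since $C\subseteq T$ gives $\phi(C)\subseteq\phi(T)$, this forces $\phi(C)\not\subseteq S$, while $C\subseteq S$. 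So no non-closed set is a model of $\Sigma_C$, and $\Sigma_C$ defines exactly the closed sets of $\langle X,\phi\rangle$.

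\emph{Stage 3 (minimum cardinality) and the main obstacle.} By Stage 1, for each critical $C$ and each implication basis $\Sigma$ we may choose an implication of $\Sigma$ whose premise saturates to $C$; distinct critical sets saturate to distinct sets, so this assignment injects the family of critical sets into $\Sigma$, giving $|\Sigma|\geq(\text{number of critical sets})=|\Sigma_C|$, and by Stage 2 $\Sigma_C$ is itself a valid basis, hence one of minimum cardinality. I expect the real work to be the inductive step of Stage 2 when $S$ is neither closed nor quasi-closed: one must pass to a strictly smaller non-closed set while guaranteeing that the escaping closure still escapes the original $S$, not just the smaller set. The relative-closure operation $T\mapsto\phi(T)\cap S$, together with the elementary fact $\phi(C)\subseteq\phi(T)$ for $C\subseteq T$, is exactly what bridges that gap; making sure these manipulations and the two equivalent descriptions of quasi-closedness are deployed correctly is where the care lies.
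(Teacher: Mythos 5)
Your proof is correct. Note, though, that the paper itself does not prove Theorem~\ref{CanT}: it states the result and cites Guigues--Duquenne, Maier, and Wild, so there is no in-paper argument to compare against. What you have written is a complete, self-contained rendition of the standard argument (closest in spirit to Wild's treatment), and all three stages check out against the paper's definitions. Stage~1 correctly combines the violated implication with the second characterization of quasi-closedness to get $\phi(A)=\phi(C)$, and then uses that every $\sigma$-closed set is either closed or quasi-closed (a fact the paper asserts when defining $\sigma$, so you are entitled to it) to rule out $\sigma(A)\subsetneq C$. Stage~2's induction is the genuinely delicate part, and your handling is right: in the non-quasi-closed case the witness $U$ satisfies $\phi(U)\not\subsetneq S$, and the possibility $\phi(U)=S$ is excluded because $S$ is not closed (worth saying explicitly); the relative-closure iteration $T\mapsto\phi(T)\cap S$ preserves $\phi(T)$ and the escape property, and the final implication $\phi(C)\subseteq S\Rightarrow\phi(C)\subseteq\phi(T)\cap S=T$ is exactly the bridge needed. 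Stage~3's injectivity argument (distinct critical sets are distinct $\sigma$-fixed points, so distinct premises) is sound. The one cosmetic remark is that your minimality claim in Stage~2, Case~1 is minimality with respect to inclusion within the family of quasi-closed subsets of $S$ having closure $\phi(S)$; stating that ordering explicitly would make the criticality verification read more cleanly.
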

J.L. Guigues and V. Duquenne proved the statements of Theorem \ref{CanT} in \cite{GuDq86}, and D. Maier gave a similar result in the context of relational databases \cite{Mai80}. M. Wild later connected Guigues and Duquenne's work to Maier's and made their results significantly more accessible in \cite{Wild94}.
\newline

An implication basis $\Sigma_O$ of a finite closure system $\langle X, \phi \rangle$ is called an \emph{optimum basis} of $\langle X, \phi \rangle$ if the sum of all cardinalities of all left and right sides of the implications in $\Sigma_O$ is minimal among all implication bases for $\langle X, \phi \rangle$. It is possible to conclude from Theorem \ref{CanT} that optimum bases are shortenings of the canonical basis; that is, if $\Sigma_C$ is the canonical basis of a finite closure system and $\Sigma_O$ is an optimum basis of the same closure system, then there is bijection $f: \Sigma_C \mapsto \Sigma_O$ such that for all $(A \rightarrow B) \in \Sigma_C$, $f(A \rightarrow B) = C \rightarrow D$ where $C \subseteq A$ and $D \subseteq B$. \newline

The \emph{Boolean satisfiability problem} gives a Boolean proposition, $P$, on $n$ variables and asks if there exists a truth assignment, $\psi$, of the variables such that $P$ evaluates to \emph{true} according to $\psi$. The \emph{Boolean CNF satisfiability problem} is a special case of the Boolean satisfiability problem that only considers propositions written in \emph{conjunctive normal form}. A Boolean proposition is in \emph{conjunctive normal form} if it is expressed as a conjunction of clauses, and each clause is a disjunction of literals; a literal is either a single variable or the complement of a single variable. The Boolean CNF satisfiability problem is equivalent to the Boolean satisfiability problem in the sense that both are NP-complete. \newline

The \emph{Boolean tautology problem} is a dual problem to the Boolean satisfiability problem: the Boolean tautology problem gives a proposition, $P$, on $n$ variables and asks if there is a truth assignment, $\psi$, of the variables such that $P$ evaluates to \emph{false} according to $\psi$. The \emph{Boolean DNF tautology problem} is a special case of the Boolean tautology problem that only considers propositions written in \emph{disjunctive normal form}; a Boolean proposition is in \emph{disjunctive normal form} if it is expressed as a disjunction of clauses, and each clause is a conjunction of literals. The Boolean DNF tautology problem is dual to the Boolean CNF satisfyability problem, and both the Boolean DNF tautology problem and the general Boolean tautology problem are co-NP-complete.

\section{The Minimum Cardinality Generator Problem} \label{MinCaS}

A \emph{generator} of a closure system $\langle X, \phi \rangle$ is a set $S$ such that $\phi(S)=X$. Given a set of implications $\Sigma$, the \emph{minimum cardinality generator} problem asks to find a generator of minimal cardinality for the closure system associated with $\Sigma$. In \cite{LuOs78}, C.L. Lucchesi and S.L. Osborn studied an equivalent problem phrased in the language of functional dependencies for relational databases, and proved a result equivalent to Lemma \ref{MinCaL} below.

\begin{lem}\label{MinCaL}
The minimum cardinality generator problem is NP-hard.
\end{lem}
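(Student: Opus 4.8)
The plan is to give a polynomial-time reduction from the Boolean CNF satisfiability problem. Let $P$ be a CNF formula with variables $x_1,\dots,x_n$ and clauses $C_1,\dots,C_m$, and let $P'$ be obtained from $P$ by adjoining the tautological clauses $C^{0}_i=x_i\vee\overline{x_i}$ for $1\le i\le n$; then $P'$ has exactly the satisfying assignments of $P$. Write $\mathcal{C}'$ for the set of clauses of $P'$ and take the ground set $X=\{t_i,f_i:1\le i\le n\}\cup\{d_C:C\in\mathcal{C}'\}$, where $t_i$ and $f_i$ are meant to record assigning $x_i$ the value true, respectively false, and $d_C$ to witness that $C$ has been satisfied. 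Let the implication set $\Sigma$ contain one implication $\{e_\ell\}\rightarrow\{d_C\}$ for every clause $C\in\mathcal{C}'$ and every literal $\ell$ of $C$, where $e_\ell=t_i$ if $\ell=x_i$ and $e_\ell=f_i$ if $\ell=\overline{x_i}$, together with a single \emph{completion} implication $\{d_C:C\in\mathcal{C}'\}\rightarrow X$. This is computable from $P$ in polynomial time, and the goal is to show that the associated closure system has a generator of cardinality at most $n$ if and only if $P$ is satisfiable; since this would let one decide satisfiability, NP-hardness follows.

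For the forward direction, given a satisfying assignment $\psi$ of $P$, I would take $G$ to consist of $t_i$ for each $i$ with $x_i$ true under $\psi$ together with $f_i$ for each $i$ with $x_i$ false under $\psi$, so $|G|=n$. Since $\psi$ satisfies every clause of $P'$, each $C\in\mathcal{C}'$ has a literal $\ell$ true under $\psi$, and the implication $\{e_\ell\}\rightarrow\{d_C\}$ then puts $d_C$ into $\phi(G)$; once all the $d_C$ are present the completion implication forces $\phi(G)=X$, so $G$ is a generator.

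For the reverse direction, I would take an arbitrary generator $G$ with $|G|\le n$ (assuming $n\ge1$, the case $n=0$ being immediate) and argue that it is essentially a satisfying assignment. Since no $t_i$ or $f_i$ appears on the right-hand side of any implication of $\Sigma$ except the completion implication, and $G$ (having at most $n<2n$ elements) omits at least one of $t_1,f_1,\dots,t_n,f_n$, the completion implication must actually fire while $\phi(G)$ is being computed, and hence every $d_C$ must enter $\phi(G)$ before that point. A given $d_C$ can enter $\phi(G)$ before the completion step only by belonging to $G$ or through an implication $\{e_\ell\}\rightarrow\{d_C\}$ with $\ell$ a literal of $C$ and $e_\ell$ already present --- and $e_\ell$, occurring on no right-hand side other than the completion implication's, is present before that step only if $e_\ell\in G$. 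Thus for every clause $C\in\mathcal{C}'$, either $d_C\in G$ or $e_\ell\in G$ for some literal $\ell$ of $C$. Applying this to the clauses $C^{0}_i$ shows that $G$ meets every set in the pairwise disjoint family $\{\,\{d_{C^{0}_i},t_i,f_i\}:1\le i\le n\,\}$, so by $|G|\le n$ we get $|G|=n$, with $G$ meeting each of these triples in exactly one element and containing nothing outside their union --- in particular $d_{C_j}\notin G$ for each original clause $C_j$. I would then define an assignment $\psi$ by declaring $x_i$ true if $t_i\in G$, false if $f_i\in G$, and (when neither holds and $d_{C^{0}_i}\in G$) arbitrarily. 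For each original clause $C_j$, since $d_{C_j}\notin G$ the dichotomy above yields a literal $\ell$ of $C_j$ with $e_\ell\in G$, and that literal is true under $\psi$; hence $\psi$ satisfies every $C_j$, i.e.\ $\psi$ satisfies $P$.

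The step I expect to require the most care, and the reason for adjoining the clauses $C^{0}_i$, is forcing a cardinality-$n$ generator to commit to exactly one of $t_i,f_i$ per variable rather than merely ``covering'' all of the clauses. Without the $C^{0}_i$ this fails: a small generator could then use a single variable in both polarities, satisfying one clause through $t_i$ and another through $f_i$, so that an \emph{unsatisfiable} formula acquires a generator of size $n$ (explicit two-variable formulas show this happens). Turning each pair $\{t_i,f_i\}$ into a clause of $P'$ fixes it, because the witness $d_{C^{0}_i}$ then costs a full unit of budget unless it is produced from $t_i$ or $f_i$, so the bound $n$ is exactly tight. Everything else is routine: the construction is plainly of polynomial size, and the fixed-point reasoning above only uses that $\phi$ is obtained by iterating the implications of $\Sigma$ until no new element is added. (Membership in NP is clear, a generator being a short certificate, so the problem is in fact NP-complete.)
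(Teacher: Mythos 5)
Your proof is correct. Like the paper, you reduce from Boolean CNF satisfiability using the same basic encoding: a pair of ground-set elements per variable recording its polarity, a witness element per clause triggered by the element of any of its literals, and a final implication that fires once all clause witnesses are present. The genuine difference lies in the gadget forcing a small generator to behave like a truth assignment. The paper closes up to the top via an extra element $s$ together with the implications $sx_i \rightarrow y_i$ and $sy_i \rightarrow x_i$; since the polarity elements $x_i, y_i$ then occur on no right-hand side except each other's, every generator must contain one element of each pair, so the lower bound $n$ and the identification of size-$n$ generators with assignments come essentially for free. You instead close up with a single implication whose right-hand side is all of $X$, and you recover the lower bound by adjoining the tautological clauses $C^0_i$, so that a generator must hit each of the pairwise disjoint triples $\{d_{C^0_i}, t_i, f_i\}$. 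This costs you the explicit counting argument and the ``what can fire before the completion implication'' analysis, but it has the virtue of making explicit why the naive reduction without the $C^0_i$ fails (a generator could use both polarities of one variable and neither of another), a point the paper's construction handles implicitly. One cosmetic remark: if some original clause $C_j$ is literally equal to $x_i \vee \overline{x_i}$, then $d_{C_j} = d_{C^0_i}$ may lie in $G$ and your dichotomy yields no literal of $C_j$ in $G$; this is harmless since such a clause is satisfied by every assignment, but it is worth either noting or assuming without loss of generality that $P$ contains no tautological clauses.
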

\begin{proof}
The proof proceeds by reduction from the Boolean CNF satisfiability problem. Let $P$ be a Boolean expression on $n$ variables, $a_1,...,a_n$, and suppose that $P$ is written in conjunctive normal form with clauses $c_1, ..., c_m$; that is, $P=c_1 \land ... \land c_m$, where each $c_j$ is a disjunction of literals.

Let $X = \{x_1,...,x_n,y_1,...,y_n,z_1,...z_m,s\}$, and define the set of implications $\Sigma$ as follows:
\begin{itemize}
\item For each clause $c_j$ of $P$ and each literal $a_i$ in $c_j$, $(x_i \rightarrow z_j) \in \Sigma$.
\item For each clause $c_j$ $P$ and each literal $\neg a_i$ in $c_j$, $(y_i \rightarrow z_j) \in \Sigma$.
\item $(z_1...z_m \rightarrow s) \in \Sigma$.
\item For each $i$, $(s x_i \rightarrow y_i) \in \Sigma$ and $(s y_i \rightarrow x_i) \in \Sigma$.
\end{itemize}

\noindent
Any generator of the closure system associated with $\Sigma$ must contain $x_i$ or $y_i$ for each $i$, and therefore must have cardinality at least $n$. On the other hand, the closure system  associated with $\Sigma$ has a generator of cardinality $n$ if and only $P$ is satisfiable.

Given a truth assignment $\psi$ of $(a_1,...,a_n)$, let $S \subseteq \{x_1,...,x_n,y_1,...,y_n\}$ be the set of cardinality $n$ such that $x_i \in S$ if (and only if) $a_i$ is true according to $\psi$ and $y_i \in S$ if (and only if) $a_i$ is false according to $\psi$. Then $z_j \in \phi(S)$ if and only if $c_j$ evaluates to true under $\psi$, and $s \in \phi(S)$ if and only if $P$ evaluates to true under $\psi$. It follows that $\phi(S)=X$ if and only if $P$ evaluates to true under $\psi$.
\end{proof}

\section{Optimum Bases of Convex Geometries}\label{OptBaS}

\begin{thm}\label{OptBaT}
The problem of finding an optimum basis for a convex geometry given an arbitrary implication basis is NP-hard.
\end{thm}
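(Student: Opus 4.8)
The plan is to reduce from the minimum cardinality generator problem, whose NP-hardness is Lemma~\ref{MinCaL}. Given an instance of that problem --- a finite closure system $\langle X, \phi\rangle$ presented by an implication basis $\Sigma$, together with the question of whether $\langle X,\phi\rangle$ has a generator of cardinality at most $k$ --- I would build, in polynomial time, an implication basis $\Sigma_{\mathcal G}$ for a convex geometry $\mathcal G$ on a ground set $X \cup W$, where $W$ is a set of auxiliary ``gadget'' elements of size polynomial in the input, together with a threshold $B = k + N$ with $N$ depending only on $|X|$ and $\Sigma$, so that $\mathcal G$ admits an implication basis of total size at most $B$ if and only if $\langle X,\phi\rangle$ has a generator of cardinality at most $k$. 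Since the right-hand side of this equivalence is NP-hard to decide, optimizing a basis for a convex geometry is NP-hard.

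The gadget has to accomplish three jobs that pull against each other. (1) $\langle X \cup W, \mathcal G\rangle$ must be a genuine convex geometry: zero-closedness will be immediate, but anti-exchange must be verified through the equivalent one-element-extension criterion, and the elements of $W$ must supply room to extend exactly at those closed sets whose trace on $X$ is a $\phi$-closed set admitting no one-element extension inside $\langle X,\phi\rangle$ --- so such ``bad'' $\phi$-closed sets may occur as traces of closed sets of $\mathcal G$ only while the gadget is in a non-maximal state, with anti-exchange repaired by moving in the $W$-direction. (2) Using the fact, noted after Theorem~\ref{CanT}, that an optimum basis is a shortening of the canonical basis, I need to pin down the canonical basis of $\mathcal G$: it should consist of a fixed skeleton of gadget implications of constant optimally-shortened size, the implications carried along from the critical sets of $\langle X,\phi\rangle$ (whose total size is part of the input), and one distinguished critical set $C^*$ --- morally $X$ together with a gadget marker --- whose canonical implication $C^* \to \phi_{\mathcal G}(C^*)$ is the only one with a non-rigid shortening. (3) Deleting $C^* \to \phi_{\mathcal G}(C^*)$ from the canonical basis must leave a closure operator $\tau$ whose restriction to $X$ equals $\phi$; then a subset $C' \subseteq C^*$ re-derives $C^*$ under the remaining implications precisely when its trace on $X$ is a $\phi$-generator, so the minimum left side of the implication replacing $C^* \to \phi_{\mathcal G}(C^*)$ in an optimum basis has size $k + O(1)$. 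Setting $B$ equal to the skeleton's size plus the carried-over size plus $k$ plus the fixed contribution of all right-hand sides then yields the stated equivalence.

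I would carry this out in the order: fix $W$ and $\Sigma_{\mathcal G}$ and describe every closed set of $\mathcal G$ explicitly in terms of the $\phi$-closed subsets of $X$ and the gadget states; verify zero-closedness and anti-exchange from that description; read off the critical sets and the canonical basis of $\mathcal G$, isolating $C^*$; apply the optimum-is-a-shortening-of-canonical fact and compute, for each canonical implication, the minimum size of its shortened form, showing all but the one for $C^*$ are rigid; and finally assemble the reduction together with the threshold $B$. The main obstacle is the simultaneous fit in jobs (1) and (2): a convex geometry is a rigid object --- in particular its own minimum generator is just the set of extreme points of its top element and is trivially computable --- so the NP-hard content of the original instance cannot live in $\mathcal G$'s generators and has to be smuggled into how the single implication $C^* \to \phi_{\mathcal G}(C^*)$ can be shortened against the rest of the basis. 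Choosing $W$ large enough to restore anti-exchange wherever $\langle X,\phi\rangle$ lacks it, yet small enough not to perturb the critical-set structure or to hand $C^*$ a cheap generator, is the delicate balancing act; the remaining steps are bookkeeping with the closure operator and Theorem~\ref{CanT}.
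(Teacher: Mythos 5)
Your high-level strategy coincides with the paper's: reduce from the minimum cardinality generator problem of Lemma~\ref{MinCaL} by embedding the given closure system into a convex geometry whose optimum basis encodes a minimum generator, and your observation that the hardness cannot live in the geometry's own generators (those are just the extreme points of the top element) is exactly the right diagnosis of where the difficulty sits. But the proposal stops where the proof has to begin: the gadget set $W$, the basis $\Sigma_{\mathcal G}$, and the verification that the result is a convex geometry with the advertised critical-set structure are all deferred as a ``delicate balancing act'' to be performed later. That construction is the entire content of the theorem; without it there is no reduction. For comparison, the paper's gadget is short and concrete: adjoin one new element $y_i$ per implication $A_i \rightarrow B_i$, replace each implication by $(A_i \cup Y \setminus \{y_i\}) \rightarrow B_i$, and add $Y \rightarrow X$. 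Diluting every premise with $Y \setminus \{y_i\}$ is what makes anti-exchange hold (any closed set missing at least two elements of $Y$ extends freely, and the one-element-deficient case is handled directly), while simultaneously forcing every proper subset of every left side to be closed, which rigidifies all left sides of an optimum basis at once.

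There is also a substantive worry about the one design decision you do commit to, namely encoding the generator in the \emph{left} side of the distinguished implication $C^{*} \rightarrow \phi_{\mathcal G}(C^{*})$. By Theorem~\ref{CanT} and the remark following it, an admissible left side $C'$ for the critical set $C^{*}$ must satisfy $\sigma(C') = C^{*}$, where $\sigma$ is the saturation operator, whose family of closed sets includes all quasi-closed sets of $\mathcal G$ --- objects far harder to control than the closed sets when you design $\Sigma_{\mathcal G}$. Arranging that the minimal $C'$ with $\sigma(C') = C^{*}$ correspond precisely to minimum $\phi$-generators, and that no stray quasi-closed set hands $C^{*}$ a cheap generator, is exactly the step you have not carried out, and it is not routine bookkeeping. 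The paper sidesteps this entirely by making every left side rigid and placing all the freedom in a right side: the implication $Y \rightarrow X$ may be shortened to $Y \rightarrow S$ precisely when $\phi(S) = X$, which is verified by computing $\phi'(Y \cup S) = Y \cup \phi(S)$ and requires no control of the saturation operator at all. If you pursue the left-side variant you must prove the analogue of that equivalence for $\sigma$, which is a genuinely harder task; as it stands, the argument has a gap at its central construction.
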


\begin{proof}
Let $n$ and $m$ be positive integers and let $X=\{x_1,...,x_m\}$. For each positive integer $i \leq n$, let $A_i$ and $B_i$ be nonempty subsets of $X$ such that $A_i \cap B_i = \emptyset$. Let $\Sigma = \{A_i \rightarrow B_i: 1 \leq i \leq n\}$ be an implication basis. Let $\phi$ be the closure operator associated with $\Sigma$. By Lemma \ref{MinCaL}, it suffices to reduce the problem of finding a minimum cardinality generator for the closure system associated with $\Sigma$ to the problem of optimizing an implication basis for a convex geometry.

For each $(A_i \rightarrow B_i) \in \Sigma$, let $\Sigma_i$ be the basis consisting solely of the implication $A_i \rightarrow B_i$. Let $Y=\{y_1,...,y_n\}$. For each $i$, let $\Sigma_i^{'}=\{(A_i \cup Y \setminus \{y_i\}) \rightarrow B_i\}$ be a singleton implication basis on the set $X \cup Y$. Let $\Sigma^{'}=(\cup_{i=1}^n \Sigma_i^{'}) \cup \{Y \rightarrow X\}$.
\begin{clm}\label{OptBaTClm1}
The closure system associated with $\Sigma^{'}$ is a convex geometry.
\end{clm}
\begin{clm}\label{OptBaTClm2}
Every optimum basis for the closure system associated with $\Sigma^{'}$ is of the form $(\cup_{i=1}^n \Sigma_i^{'}) \cup \{Y \rightarrow S\}$, where $S$ is a minimum cardinality generator for the closure system associated with $\Sigma$.
\end{clm}

\begin{proof}[Proof of Claim \ref{OptBaTClm1}]
Let $C$ be a closed set in the closure system associated with $\Sigma^{'}$. If $|C \cap Y| \leq n-2$, then $C \cup \{x\}$ is closed for every $x \in X \setminus C$; if $|C \cap Y| \leq n-2$ and $X \setminus C = \emptyset$, then $C \cup \{y\}$ is closed for every $y \in Y \setminus C$. Furthermore, if $|C \cap Y| = n = |Y|$, then $C$ must equal the top element, $X \cup Y$. So the only case left to consider is $|C \cap Y| = n-1$. Let $y_i$ be the sole element of $Y \setminus C$ and let $A_i, B_i$ be the subsets of $X$ with $((A_i \cup Y \setminus \{y_i\}) \rightarrow B_i) \in \Sigma^{'}$. Then for each $x \in X \setminus A_i$, $C \cup \{x\}$ is closed. If $X \setminus A_i \subseteq C$, then $B_i \subseteq C$, which implies that $C \cup \{x\}$ is closed for every $x \in X$. Lastly, if $X \subseteq C$, then $C \cup \{y_i\}$ is the top element (which is closed).
\end{proof}

\begin{proof}[Proof of Claim \ref{OptBaTClm2}]
Let $\phi^{'}$ be the closure operator associated with $\Sigma^{'}$. For each positive integer $i$ less than or equal to $n$, let $L_i=A_i \cup Y \setminus \{y_i\}$ and let $R_i=B_i$. Let $L_0=Y$ and let $R_0=X$. For each $i$, $\phi(L_i)=L_i \cup R_i$: when $i = 0$, $L_i \cup R_i$ is closed because it equals the top element, and when $i \neq 0$, $L_i \cup R_i$ is closed because $y_i \not \in L_i \cup R_i$ (since $L_i \cup R_i = (Y \setminus \{y_i\}) \cup (A_i \cup B_i)$ where $A_i \cup B_i \subseteq X$) and $y_i \in L_j$ for each $j \neq i$. 

Since all of the $A_i$ are nonempty, every proper subset of every $L_i$ is closed; but since $R_i \not \subseteq L_i$ for any $i$, none of the $L_i$ are closed. Also, the closures of the $L_i$ are distinct, since they have distinct intersections with $Y$. So for each $i$, $L_i$ is a critical set and is not the saturation of any set below it. Therefore, by Theorem \ref{CanT}, an optimum basis for the closure system associated with $\Sigma^{'}$ must consist of implications of the form $L_i \rightarrow S_i$ where $S_i \subseteq R_i$ for each $i$; furthermore, when $i$ is positive, $L_j \not \subseteq L_i \cup R_i$ for any $j \neq i$, so $S_i$ must equal $R_i$. 

Let $S$ be a subset of $X$ and let $\Sigma^{'}_O = \{L_i \rightarrow R_i: 1 \leq i \leq n\} \cup \{Y \rightarrow S\} = (\cup_{i=1}^n \Sigma_i^{'}) \cup \{Y \rightarrow S\}$. Let $\phi^{'}_O$ be the closure operator for the closure system associated with $\Sigma^{'}_O$. $\Sigma^{'}_O$ is a valid basis for the closure system associated with $\Sigma^{'}$ if and only if the implication $Y \rightarrow X$ can be derived in $\Sigma^{'}_O$, or equivalently, if and only if $X \subseteq \phi^{'}_O(Y)$. Since $L_i \setminus Y = A_i$ and $R_i=B_i$ for each positive $i$, $\phi^{'}_O(Y) = \phi^{'}_O(Y \cup S) = Y \cup \phi(S)$. So $\Sigma^{'}_O$ is a valid basis for the closure system associated with $\Sigma^{'}$ if and only if $S$ is a generator of the closure system associated with $\Sigma$. It follows that $\Sigma^{'}_O$ is an optimum basis for the closure system associated with $\Sigma^{'}$ if and only if $S$ is a minimum cardinality generator of the closure system associated with $\Sigma$.
\end{proof}
\let \qed \relax
\end{proof}

\section{Deciding if a Closure System is a Convex Geometry}\label{DecConS}

\begin{thm}\label{DecConT}
Given an arbitrary implication basis, the problem of determining whether the closure system associated with that basis is a convex geometry is co-NP-complete.
\end{thm}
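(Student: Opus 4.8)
The plan is to prove the two halves of "co-NP-complete" separately: membership in co-NP, and co-NP-hardness via a reduction from the Boolean DNF tautology problem.

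For membership in co-NP, I would show that the complementary problem---deciding that the closure system is \emph{not} a convex geometry---lies in NP. A short certificate is either a single element $e \in \phi(\emptyset)$ witnessing failure of zero-closedness (which is in fact also decidable directly in polynomial time by computing $\phi(\emptyset)$), or a triple consisting of a set $A \subseteq X$ together with two distinct elements $x, y \in X \setminus A$; a verifier computes $\phi(A)$ and checks that $\phi(A)=A$, that $x,y \notin A$ and $x \ne y$, and that $y \in \phi(A \cup \{x\})$ and $x \in \phi(A \cup \{y\})$. Since closures are computable in polynomial time from an implication basis, and since a finite closure system fails to be a convex geometry exactly when it is not zero-closed or such a triple exists, this places the complement in NP and hence the problem in co-NP.

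For co-NP-hardness, I would reduce from the Boolean DNF tautology problem: given a DNF $P = D_1 \vee \cdots \vee D_m$ over variables $a_1, \dots, a_n$ (without loss of generality no $D_j$ contains a variable and its negation, and no $D_j$ is empty), construct in polynomial time an implication basis $\Sigma_P$ on a ground set built from two elements $x_i, y_i$ per variable (encoding $a_i$ true, $a_i$ false), one indicator element $g_j$ per clause (intended meaning: $D_j$ is falsified), a distinguished ``collapse'' element, and some auxiliary ``slack'' elements. The implications of $\Sigma_P$ are designed so that: (i) having both $x_i$ and $y_i$ forces the collapse element, which together with the clause indicators forces the whole ground set, so that proper closed sets correspond to consistent partial assignments decorated with indicator and slack elements; (ii) a false literal of $D_j$ forces $g_j$, and when the assignment is complete a true $D_j$ forbids $g_j$, so that the closure of a full assignment $\psi$ falsifying every $D_j$ is a closed set with no legal extension---every candidate extension reintroduces a conflict or an all-indicators collapse---hence an anti-exchange violation; and (iii) whenever some clause is true under $\psi$, or the assignment is still partial, the slack and collapse elements supply a legal extension. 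The intended conclusion is that the closure system of $\Sigma_P$ is a convex geometry if and only if $P$ is a tautology.

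The main obstacle is direction (iii), i.e. ``$P$ a tautology $\Rightarrow$ convex geometry,'' which requires a complete description of the closed sets of $\Sigma_P$ together with a verification that every proper one extends by a single element. The delicate point is that a closed set may carry clause-indicator or slack elements \emph{gratuitously} (not forced by its assignment), and in a careless construction such a set---for example a partial assignment that is one assignment away from completing every clause, together with all clause indicators---is a dead end even when $P$ is a tautology, which would make the reduction incorrect. The construction must therefore be tuned (this is precisely the role of the slack elements and of the exact firing conditions of the collapse rule) so that from any spurious closed set one can still adjoin some element; and the hypothesis that $P$ is a tautology must be invoked exactly where one would otherwise be stuck, namely to rule out a \emph{full} assignment that is closed and falsifies all clauses. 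With this case analysis in place, the easy direction (``$P$ not a tautology $\Rightarrow$ not a convex geometry'') follows by exhibiting the closure of a falsifying assignment as an explicit anti-exchange violation, which completes the reduction.
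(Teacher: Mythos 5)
Your co-NP membership argument is fine and is essentially the paper's: a certificate for the complement is a closed set $A$ together with two distinct elements $x,y \in X \setminus A$ with $y \in \phi(A \cup \{x\})$ and $x \in \phi(A \cup \{y\})$, all checkable in polynomial time by forward chaining, with zero-closedness checked directly. The gap is in the hardness half. You have chosen the right source problem (DNF tautology) and correctly located the crux --- when $P$ is a tautology you must show that \emph{every} proper closed set, including ones carrying indicator or slack elements gratuitously, admits a one-element closed extension --- but you never exhibit an implication basis that achieves this. The firing conditions of the collapse rule and the role of the slack elements are left unspecified; implications cannot ``forbid'' an indicator $g_j$, only force elements, so it is not clear how a satisfied clause is supposed to neutralize $g_j$; and you yourself identify a failure mode (a partial assignment decorated with all indicators that is a dead end even when $P$ is a tautology) without showing how any concrete choice of implications avoids it. Saying the construction ``must be tuned'' so that this does not happen is the entire content of the reduction; without the explicit implications and the accompanying case analysis of the closed sets, there is no proof.

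For comparison, the paper resolves exactly this difficulty with a leaner gadget that has no indicator, collapse, or slack elements: ground set $\{x_1,\dots,x_n,t_1,\dots,t_n,f_1,\dots,f_n\}$, cyclic implications $x_i t_i \rightarrow x_{i+1}$ and $x_i f_i \rightarrow x_{i+1}$ (with $x_{n+1}=x_1$), and one implication $S_j \rightarrow x_1$ per clause, where $S_j$ lists the literals of $c_j$ as $t$'s and $f$'s. A falsifying total assignment, read as a subset of the $t_i,f_i$, is then closed because no left-hand side fires, and adjoining any single $x_i$ drives the entire cycle $\{x_1,\dots,x_n\}$ into the closure, giving a symmetric anti-exchange violation between any two of the $x_i$. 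Conversely, when $P$ is a tautology, any closed set meeting $\{t_i,f_i\}$ for every $i$ already contains some $S_j$, hence $x_1$, hence all of $\{x_1,\dots,x_n\}$, and is then freely extendable; and a closed set missing both $t_i$ and $f_i$ for some $i$ extends by $x_i$, or by an $x_j$ chosen just below the part of the cycle already present. Because every implication whose conclusion is an $x$-element either lies on the cycle or points at $x_1$, the ``spurious closed set'' problem you worry about simply does not arise, and the case analysis closes in a few lines. To complete your proof you would need to supply a construction at this level of explicitness and verify the analogous case analysis for it.
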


\begin{proof}
Inclusion in co-NP is clear, because it is easy to validate a counterexample to the anti-exchange property when an implication basis is given (and it is easy to determine if the associated closure system is zero-closed). 

The proof of co-NP-hardness proceeds by reduction from the Boolean DNF tautology problem. Let $P$ be a Boolean expression on $n \geq 2$ variables, $a_1,...,a_n$, and suppose that $P$ is written in disjunctive normal form. Let $c_1, ..., c_m$ be the clauses of $P$; so $P=c_1 \lor ... \lor c_m$, where each $c_j$ is a conjunction of literals.

Let $X=\{x_1,...,x_n,t_1,...,t_n,f_1,...,f_n\}$. Let $x_{n+1}=x_1$. Define an implication basis $\Sigma$ on $X$ as follows:
\begin{itemize}
\item For each positive integer $i \leq n$, $(x_i t_i \rightarrow x_{i+1}) \in \Sigma$ and $(x_i f_i \rightarrow x_{i+1}) \in \Sigma$.
\item For each $c_j$, define the set $S_j \subseteq \{t_1,...,t_n,f_1,...,f_n\}$ as follows: $t_i \in S_j$ if (and only if) $c_j$ contains the literal $a_i$, and $f_i \in S_j$ if (and only if) $c_j$ contains the literal $\neg a_i$. For all $S_j$, $(S_j \rightarrow x_1) \in \Sigma$.
\end{itemize}
The closure system associated with $\Sigma$ is a convex geometry if and only if $P$ is a tautology.

If $P$ is not a tautology, then there is truth assignment, $\psi$, of the $a_i$ such that $P$ is false under $\psi$. Define $A \subseteq \{t_1,...,t_n,f_1,...,f_n\}$ such that $t_i \in A$ if (and only if) $a_i$ is true according to $\psi$ and $f_i \in A$ if (and only if) $a_i$ is false according to $\psi$. Since $P$ is false under $\psi$, all of its clauses are false under $\psi$ as well. It follows that no implication in $\Sigma$ has a left hand side that is a subset of $A$, so $A$ is closed. But then the closure system associated with $\Sigma$ fails the anti-exchange property, since, for any $i$, the closure of $A \cup \{x_i\}$ contains all of $\{x_1,...,x_n\}$.

Now suppose that $P$ is a tautology. It remains to show that for every closed set $C$ other than the top element, there is an $e \in X \setminus C$ such that $C \cup \{e\}$ is closed. Let $A$ be a closed set such that for every $i$, $t_i \in A$ or $f_i \in A$. Then there is some implication $(S \rightarrow x_1) \in \Sigma$ with $S \subseteq A$, so in fact $\{x_1,...,x_n\} \subseteq A$. Then $A \cup \{e\}$ is closed for all $e \in X \setminus A$. Let $B$ be a closed set such that for some $i$, $t_i \not \in B$ and $f_i \not \in B$. Then there are three cases:
\begin{itemize}
\item If $x_i \not \in B$, then $B \cup \{x_i\}$ is closed.
\item If there is $j \leq n$ with $x_{j+1} \in B$ and $x_j \not \in B$, then $B \cup \{x_j\}$ is closed.
\item If $\{x_1,...,x_n\} \subseteq B$, then, as before, $B \cup \{e\}$ is closed for all $e \in X \setminus B$.
\end{itemize}
\end{proof}

\section*{Acknowledgement}
I am grateful to Kira Adaricheva for her valuable support and professional guidance.

\end{document}